\newtheorem{thm}{Theorem}[section]
\newtheorem{lem}[thm]{Lemma}
\newtheorem{prop}[thm]{Proposition}
\newtheorem{rem}[thm]{Remark}
\numberwithin{equation}{section}
  \newcommand{\gam}{\gamma}
\def\var{{\rm var}}
\def\cov{{\rm cov}}
\def\MSE{{\rm MSE}}
\def\MSPE{{\rm MSPE}}
\def\sumj{\sum_{j=1}^{n}}
\def\lambp{\lambda_i^{(BP)}}
\def\lamd{\lambda_i^{(d)}}
\def\lamp{\lambda_i^{(P)}}
\def\lamppr{\lambda_i^{(P),(PR)}}
\def\lampyl{\lambda_i^{(P),(YL)}}
\def\lamb{\lambda_i^{(B)}}
\def\lambpr{\lambda_i^{(B),(PR)}}
\def\lambyl{\lambda_i^{(B),(YL)}}
\def\lamroot{\lambda_{0i}}
\def\lamrootpr{\lambda_{0i}^{(PR)}}
\def\lamrootyl{\lambda_{0i}^{(YL)}}
\def\quarter{\frac{1}{4}}
\def\xib{(X_i\beta)}
\def\hxib{(\hat X_i\beta)}
\def\xhatb{\hat X_i\beta}
\def\exp{\mathbb{E}}
\def\prob{\mathbb{P}}
\def\zx{(Z_i - \hat X_i\beta)}
\def\fti{\frac{1}{t_i}}
\def\fbt{\frac{b_i}{t_i}}
\title{Small Area Estimation under Square Root Transformed Fay-Herriot model with Functional Measurement Error in Covariates}
\author[1]{Ka Long Keith Ho}
\author[2]{Masayo Y. Hirose\footnote{Corresponding Author, E-mail address: masayo@imi.kyushu-u.ac.jp}}
\author[3]{Malay Ghosh}
\affil[1]{Joint Graduate School of Mathematics for Innovation, Kyushu University}
\affil[2]{Institute of Mathematics for
Industry, Kyushu University}
\affil[3]{Department of Statistics, University of Florida}
\newcommand{\Addresses}{{
  \bigskip
  \bigskip
 Ka Long Keith Ho, \textsc{Joint Graduate School of Mathematics for Innovation, Kyushu University, 744 Motooka Nishi-ku Fukuoka 819-0395, Japan}\par\nopagebreak
  \textit{E-mail address}: \texttt{ho.kalongkeith.224@s.kyushu-u.ac.jp}

  \medskip

Masayo Y. Hirose, \textsc{Institute of Mathematics for
Industry, Kyushu University, 744
Motooka, Nishi-ku, Fukuoka, Japan}\par\nopagebreak
  \textit{E-mail address}, \texttt{masayo@imi.kyushu-u.ac.jp}

  \medskip

  Malay Ghosh, \textsc{Department of Statistics, University
of Florida, 223 Griffin-Floyd Hall,
Gainesville, FL, USA}\par\nopagebreak
  \textit{E-mail address}, \texttt{ghoshm@ufl.edu}
}}
\date{\today}
\begin{document}
\setlength{\baselineskip}{4.5mm}

\maketitle

\begin{abstract}
We consider a small area estimation model under square-root transformation in the presence of functional measurement error. When measurement error is present, the Bayes predictor can no longer be used as it depends on the covariates even if parameters are known. Therefore suitable replacements are called for, and we propose a predictor that only depends on observed responses and data obtained from a large secondary survey. Moreover, some estimating methods of unknown parameters are considered. In the simulations section, We evaluate the performance using the mean squared prediction error ($\MSPE$) and discuss several scenarios in terms of the number of areas and the sample size in a large secondary survey.
\end{abstract}

\section{Introduction}\label{intro}

The demand for reliable estimation has been growing, especially in small geographic areas and sub-populations and it is now widely recognized that small area estimation needs to be model-based. The classic paper of Fay and Herriot (1979) \cite{FH1979} introduced a normal mixed effects model for small area estimation with area-level random effects. Since then, there has been tremendous growth in the small area literature including analysis of binary and count data. We refer to Molina and Rao (2015) \cite{rao2015} for details on small-area estimation.

Quite often one encounters skewed data where normality can be justified only after a suitable transformation. A commonly used transformation is the logarithmic transformation, for example in handling income data in Slud and Maiti (2006) \cite{Slud2006}, and Ghosh et al. (2015) \cite{Kubokawa2015} used back transformation in the small area context.

One important issue that has surfaced very recently is measurement error in the covariates which is typical in some of the small area problems. This issue was first addressed in Ybarra and Lohr (2008) \cite{YL2008}. This has been followed up in several other articles (Torabi (2012)\cite{Torabi2012}, Datta et al., (2018) \cite{DATTA2018}, Mosaferi et al. (2023) \cite{Mosaferi} etc.), with a multivariate extension given in Arima et al. (2017) \cite{Arima2017}.

Variance stabilizing variable transformation has long been in the literature starting with Bartlett (1936) \cite{SqaureRoot1936} and Anscombe (1948) \cite{Anscombe1948}. Other than closer approximations to normality, the virtue of such transformations is to produce an approximate variance free from unknown parameters and is particularly useful for binary and count data where variances are functions of the means. The arc sin transformation for binary data and square root transformation for the count data achieve this. Ghosh et al. (2022) \cite{Hirose2022}, and Hirose et al. (2023) \cite{arcsin2023} used such transformations in the small area context for binary and count data respectively. 

The objective of this article is to use the square root transformation in small area problems where covariates are measured with error. Earlier, Mosaferi et al. (2023) \cite{Mosaferi} and Mosaferi et al. (submitted for publication) have addressed such problems in the context of logarithmically transformed data. 

The paper is arranged as follows. We formulate the problem in \Cref{secmodel} and consider the predictors assuming known model parameters in \Cref{secmain}, where we also give expressions for the bias and $\MSPE$ of the considered class of predictors. In \Cref{secEST} we discuss estimation schemes for model parameters and give the empirical versions of the predictors presented in \Cref{secmain}. Simulation results are given in \Cref{secSIM} and \Cref{secDIS} contains some formal remarks. Proofs are given in the Appendix. 

\section{Model Setup}\label{secmodel}
We consider a small-area functional measurement error model with $m$ areas of interest. In area $i = 1,...,m$, we observe a count $Y_i$ that is supposed to be independently Poisson distributed with an unknown mean $\lambda_i$, so that \[Y_i \sim Poisson(\lambda_i) \text{  for  } i = 1,...,m.\] Under the square-root transformation (Bartlett (1936) \cite{SqaureRoot1936}) for the Poisson distribution, we obtain $\var[\sqrt{Y_i}] = \var[Z_i]\approx \quarter$. We denote $\theta_i = \sqrt{\lambda_i}$ and assume 
\[\theta_i = X_i\beta + v_i \sim N(X_i\beta,a)\]
for unobserved $p$-dimensional covariates $X_i$ and regression parameter $\beta \in \mathbb{R}^p$. The random effect $v_i$ is supposed to be normally distributed with $ v_i \sim N(0,a)$ under unknown variance $a \in \mathbb{R}^+$. In this paper, we consider
\[Z_i = \theta_i + e_i,\]
for sampling error $e_i \sim N(0,\quarter).$ From other surveys, we consider $t_i$ surrogate covariates $\hat X_{ij}$ of $X_i$ in area $i$,
where $\mathbb{E}[\hat X_{ij}] = X_i$ and $\cov[\hat X_{ij}] = \Sigma_i$ are i.i.d. for $j = 1,...,t_i$, where for $i = 1,...,m$, $\Sigma_i$ is a known positive definite matrix. We assume $\inf_i t_i$ is large, whose purpose is to consider a diminishing effect of measurement error when sufficient auxiliary data is obtained. Therefore we appeal to the Central Limit Theorem and assume
\[\hat X_i = \frac{1}{t_i}\sumj \hat X_{ij} = X_i + \Delta_i \sim N\left(X_i,\frac{1}{t_i}\Sigma_i\right), \] where we denote $\Delta_i = \frac{1}{t_i}\sumj \Delta_{ij}$. Also, we note that
\[(\hat X_i - X_i)\beta \sim N\left(0,\frac{1}{t_i}\beta^T\Sigma_i\beta\right),\]
as we denote $b_i = \beta^T\Sigma_i\beta$ for convenience. We assume the errors $(e_i,v_i,\Delta_{ij})$ are independent across all $i$ and $j$ and are mutually independent of each other, and thus $Z_i$ and $\hat X_i$ are independent for all $i = 1,...,m$. This model is similar to the Fay-Herriot Model (Fay and Herriot (1979) \cite{FH1979}) under measurement error studied by Ybarra and Lohr (2008) \cite{YL2008}, and here we consider an extra layer of complexity with the square-root transformation. Whilst the normality assumption on $\Delta_i$ is usually not compulsory in measurement error models, we incorporate this assumption as it is natural in the view of Bayes predictors.

Ybarra and Lohr (2008) \cite{YL2008} considered the same problem without square-root estimation. They considered the predictors for $y_i$ (analogous to $\theta_i$ in this paper) of the form $w_i Y_i + (1 - w_i) \hat X_i \beta$, where $Y_i$ are the observed responses and $\hat X_i$ are the observed covariates under functional measurement error. When the parameters $a$ and $\beta$ are known, they proved that the optimal choice is $w_i = 1 - \frac{\psi_i}{a_i + b_i + \psi_i}$, where $\psi_i$ is the variance of the sampling error, which is analogous to $1/4$ in our setting. 

In this paper, we consider predictors of the form $(w_i Y_i + (1 - w_i) \hat X_i \beta)^2 + C$, where C is a constant. This consideration is similar to the Bayes predictor given in Ghosh et al. (2022) \cite{Hirose2022}, which will be elaborated in \Cref{secmain}). We justify the choice $w_i = 1 - \frac{1/4}{a_i + b_i + 1/4}$ of Ybarra and Lohr (2008) \cite{YL2008} under the currently considered model. We will give the explicit expression of the mean squared prediction error ($\MSPE$ hereon) of this class of predictors in \Cref{thmMSE} and use it for evaluation of performances. 

We are further motivated to consider the scenario when an abundant amount of data from secondary surveys are available, even if some areas are having small sample sizes $n_i$ in the original data set, which propagates us to study results under large $\inf_i t_i$. For example, Datta et al. (2018) \cite{DATTA2018} analyzed BMI data from the 2013-14 US NHANES with the 2014 US NHIS as auxiliary information with $m = 50$ small domains. In their study, $t_i$ ranged from 77 to 4866, collected from 35928 individuals in total, which is considerably larger than their domain sample sizes of 31 to 479 from 5588 individuals. The consideration of large $t_i$ is not present in similar literature (Ybarra and Lohr (2008) \cite{YL2008}, Mosaferi et al.(2022) \cite{Mosaferi}), but is going to be useful in recovering the measurement error-free Bayes predictor and provides an extra layer of justification for our proposed predictor, which will be elaborated in later sections. Besides, Mosaferi et al. (2023) \cite{Mosaferi} suggested the predictor under log transformed Fay-Herriot model with measurement error. However, it seems not to correct bias in the presence of measurement error. So we obtained the bias and corrected how to propose a predictor in this study, while taking into account of MSPE.

\section{Predictors in the Presence of Measurement Error}\label{secmain}
\subsection{Bayes Predictor and Related Class of Predictors}\label{secmainintro}
We introduce and proposed several predictors for $\lambda_i$ in this section when the model parameters $(a,\beta)$ are known. When there is no measurement error, the best predictor of $\lambda_i$ that has the minimum $\MSPE$ in this setting is the Bayes predictor given by 
\begin{equation}\label{BP}
\lambp = \exp[\lambda_i|Z_i]=\{(1-B)Z_i+B\xib\}^2+\frac{1-B}{4} \text{  for  } i = 1,...,m
\end{equation}
for $B = \frac{1/4}{1/4+a}$, whose empirical version was studied recently (Ghosh et al. (2022) \cite{Hirose2022}). 
In the presence of measurement error, the Bayes predictor can no longer be used as $X_i\beta$ is unknown. Thus we turn to $\xhatb$ as an alternative, and as a consequence, we must account for the effect of measurement error represented by $\Delta_i$. Inspired by Ybarra and Lohr \cite{YL2008} and the Bayes predictor given in Ghosh et al. (2022) \cite{Hirose2022}, we consider predictors of the form 
\[\hat\lambda_i(w_i;C_i) = \{w_iZ_i + (1-w_i)\xhatb\}^2+C_i,\]
with a weight $w_i \in [0,1]$ and $C_i$ a constant value free from $Z_i$ and $X_i\beta$. The $\MSPE$ of any predictor is given by the sum of its variance and its bias squared by the bias-variance decomposition. Therefore any biased predictor can always be improved by adjusting for bias when it can be calculated. We give an explicit expression of the bias in the following Lemma so that for any fixed $w_i$ the optimal choice for $C_i$ is \[C_i = -\exp\left[\{w_iZ_i + (1-w_i)\xhatb\}^2 - \lambda_i\right] \text{  for  } i = 1,...,m.\]

\begin{lem}\label{bias}
For each $i = 1,...,m$, \[\exp\left[\{w_i Z_i + (1-w_i)\xhatb\}^2 - \lambda_i\right] = w_i^2(a+1/4) + \fti (1-w_i)^2b_i-a.\]
\end{lem}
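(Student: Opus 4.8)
The plan is to reduce the identity to two elementary second-moment computations for Gaussians by invoking the linearity of expectation at the very start. Writing $W_i = w_iZ_i + (1-w_i)\xhatb$ and recalling $\lambda_i = \theta_i^2$, I would first observe that
\[\exp\left[W_i^2 - \lambda_i\right] = \exp[W_i^2] - \exp[\theta_i^2].\]
The virtue of this splitting is that the joint dependence between $W_i$ and $\lambda_i$ through the shared random effect $v_i$ becomes irrelevant for the bias: only the two marginal second moments enter, so no covariance between the predictor and the target needs to be tracked.

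For the first moment, the key step is to recenter both constituents of $W_i$ about the common (unobserved) mean $X_i\beta$. Substituting $Z_i = X_i\beta + v_i + e_i$ and $\xhatb = X_i\beta + (\hat X_i - X_i)\beta$ yields
\[W_i = X_i\beta + w_i(v_i + e_i) + (1-w_i)(\hat X_i - X_i)\beta,\]
that is, $X_i\beta$ plus a mean-zero perturbation. Because $v_i \sim N(0,a)$, $e_i \sim N(0,\quarter)$ and $(\hat X_i - X_i)\beta \sim N(0,\fbt)$ are mutually independent, the perturbation has variance $w_i^2(a+\quarter) + (1-w_i)^2\fbt$, and expanding the square (the cross term vanishes in expectation) gives
\[\exp[W_i^2] = (X_i\beta)^2 + w_i^2(a+\quarter) + (1-w_i)^2\fbt.\]
The analogous and simpler computation from $\theta_i = X_i\beta + v_i$ gives $\exp[\theta_i^2] = (X_i\beta)^2 + a$. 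Subtracting cancels the $(X_i\beta)^2$ terms and leaves precisely $w_i^2(a+\quarter) + \fti(1-w_i)^2 b_i - a$, which is the asserted expression.

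The computation itself is routine, so the only point demanding genuine care is the bookkeeping of the independence structure: one must confirm that the three noise sources $v_i$, $e_i$ and $(\hat X_i - X_i)\beta$ are jointly independent---so that no covariance survives in the variance of the perturbation---and that the true covariate $X_i\beta$ is held fixed throughout, which is the correct frame under the functional measurement-error model. Both facts are guaranteed by the independence assumptions and the model specification recorded in \Cref{secmodel}.
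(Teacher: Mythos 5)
Your proposal is correct and follows essentially the same route as the paper: identify the law of $w_iZ_i+(1-w_i)\xhatb$ as $N\bigl(X_i\beta,\;w_i^2(a+\quarter)+\fti(1-w_i)^2b_i\bigr)$ via the independence of the noise sources, read off its second moment, subtract $\exp[\lambda_i]=\xib^2+a$, and cancel the $\xib^2$ terms. Your version merely makes the recentering and the independence bookkeeping more explicit than the paper does.
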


\begin{proof}
By independence of $Z_i$ and $\xhatb$, \[w_iZ_i + (1-w_i)\xhatb \sim N\left(X_i\beta, w_i^2(a+\quarter)+\fti (1-w_i)^2b_i\right).\] 
Taking expectations, we have 
\begin{align*}
    \mathbb{E} \left[ \{w_iZ_i + (1-w_i)\xhatb \}^2 - \lambda_i \right] &= \xib^2 + w_i^2(a+\quarter) + \fti (1-w_i)^2b_i - \xib^2 - a \\
    &= w_i^2(a+\quarter)+ \fti (1-w_i)^2 b_i - a.
\end{align*}
\end{proof}
Henceforth we will assume this value of $C_i$ for all predictors unless otherwise specified. For each value of $w_i$, we may also write down the analytical $\MSPE$ using \Cref{bias}:
\begin{thm}\label{thmMSE}
The $\MSPE$ of $\hat\lambda_i(w_i)$ is given by
\begin{equation}\label{MSEwi}
    \begin{split}
    \MSPE[\hat\lambda_i(w_i)] &= \xib^2\left[(4a+\frac{4b_i}{t_i}+1)w_i^2-(8a+\frac{8b_i}{t_i})w_i+4a+4
    \fbt \right]\\
     &+ \bigg[3 \bigg\{w_i^2(a+\quarter)+(1-w_i)^2\fbt\bigg\}^2 - \bigg\{w_i^2(a+\quarter)+(1-w_i)^2\fbt - a\bigg\}^2\\
     &+ 3a^2 -6a^2w_i^2-\frac{aw_i^2}{2}-2a\fbt(1-w_i)^2 \bigg].
    \end{split}
\end{equation}
\end{thm}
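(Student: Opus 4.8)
The plan is to exploit the bias--variance decomposition together with the fact that, by \Cref{bias}, the chosen constant $C_i$ makes $\hat\lambda_i(w_i)$ exactly unbiased for $\lambda_i$. Writing $S_i = w_iZ_i+(1-w_i)\xhatb$ for the inner term, the prediction error is $\hat\lambda_i(w_i)-\lambda_i = S_i^2 + C_i - \theta_i^2$, so, since $C_i$ is a constant and the bias vanishes, the $\MSPE$ (taken over $(v_i,e_i,\Delta_i)$ with $X_i,\beta,a$ fixed) collapses to a single variance:
\[
\MSPE[\hat\lambda_i(w_i)] = \var\!\left[S_i^2 - \theta_i^2\right].
\]
Thus the whole task reduces to evaluating one variance, and no separate bias term survives.

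First I would record the joint law of $(S_i,\theta_i)$. Both are linear in the independent Gaussians $v_i,e_i,\Delta_i\beta$: indeed $S_i = \xib + w_i(v_i+e_i)+(1-w_i)\Delta_i\beta$ and $\theta_i = \xib + v_i$, so $(S_i,\theta_i)$ is bivariate normal with common mean $\xib$, variances $\sigma_i^2 := w_i^2(a+\quarter)+\fti(1-w_i)^2 b_i$ and $a$, and covariance $\cov(S_i,\theta_i)=w_i a$. Centering by $\tilde S_i = S_i - \xib$ and $\tilde\theta_i = \theta_i - \xib = v_i$, I would expand
\[
S_i^2 - \theta_i^2 = 2\,\xib\,(\tilde S_i - \tilde\theta_i) + (\tilde S_i^2 - \tilde\theta_i^2),
\]
a sum of a linear and a quadratic form in the mean-zero Gaussian pair $(\tilde S_i,\tilde\theta_i)$.

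Next I would split the variance of this sum into three pieces. The covariance between the linear and quadratic parts is $\mathbb{E}[(\tilde S_i-\tilde\theta_i)(\tilde S_i^2-\tilde\theta_i^2)]$, the expectation of an odd (cubic) function of a mean-zero Gaussian vector, which vanishes by symmetry; this clean step removes all $\xib$-linear contributions and explains why the second and third lines of \eqref{MSEwi} carry no factor of $\xib$. The linear part contributes $4\,\xib^2\,\var(\tilde S_i-\tilde\theta_i) = 4\,\xib^2(\sigma_i^2 + a - 2w_i a)$; substituting $\sigma_i^2$ and collecting powers of $w_i$ reproduces the first bracket of \eqref{MSEwi}. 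For the quadratic part $\var(\tilde S_i^2 - \tilde\theta_i^2)$ I would invoke Isserlis' theorem on the bivariate normal: $\mathbb{E}[\tilde S_i^4]=3\sigma_i^4$, $\mathbb{E}[\tilde\theta_i^4]=3a^2$, and $\mathbb{E}[\tilde S_i^2\tilde\theta_i^2]=\sigma_i^2 a + 2\{\cov(S_i,\theta_i)\}^2 = \sigma_i^2 a + 2w_i^2 a^2$, while $\mathbb{E}[\tilde S_i^2-\tilde\theta_i^2]=\sigma_i^2-a$ is exactly the bias from \Cref{bias}.

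The main obstacle is purely the algebraic reconciliation of the quadratic piece with the stated form. Assembling the moments gives $\var(\tilde S_i^2-\tilde\theta_i^2)=2\sigma_i^4 - 4a^2 w_i^2 + 2a^2$, and I would then verify that the second and third lines of \eqref{MSEwi}, namely $3\sigma_i^4-(\sigma_i^2-a)^2+3a^2-6a^2w_i^2-\frac{aw_i^2}{2}-2a\fbt(1-w_i)^2$, collapse to the same expression. The nontrivial cancellations surface when the cross term $2a\sigma_i^2$ inside $(\sigma_i^2-a)^2$ is expanded: its $\frac{aw_i^2}{2}$ and $2a\fbt(1-w_i)^2$ contributions annihilate precisely the two correction terms, leaving $2\sigma_i^4-4a^2w_i^2+2a^2$. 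Keeping careful track of these cancellations, rather than any conceptual difficulty, is the only delicate part of the argument.
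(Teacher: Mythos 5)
Your proof is correct, and it reaches \eqref{MSEwi} by a genuinely cleaner route than the paper's. The paper expands $\E[\{(w_iZ_i+(1-w_i)\xhatb)^2+C_i-\lambda_i\}^2]$ into six cross terms and evaluates each with Gaussian moment formulas, the laborious one being $\E[\lambda_i\{w_iZ_i+(1-w_i)\xhatb\}^2]$, which requires substituting $Z_i=\theta_i+e_i$ and tracking every product. You instead use the unbiasedness guaranteed by the choice of $C_i$ to collapse the $\MSPE$ to $\var(S_i^2-\theta_i^2)$, center the jointly Gaussian pair $(S_i,\theta_i)$ with $\cov(S_i,\theta_i)=w_ia$, and split into a linear and a quadratic form; the cross-covariance dies because it is an odd moment of a mean-zero Gaussian vector, and Isserlis' theorem handles the fourth moments. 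I checked the pieces: the linear part gives $4\xib^2(\sigma_i^2+a-2w_ia)$, which expands exactly to the first bracket of \eqref{MSEwi}, and the quadratic part gives $2\sigma_i^4+2a^2-4a^2w_i^2$, which agrees with the second and third lines after the cancellations you describe (the $2a\sigma_i^2$ term inside $(\sigma_i^2-a)^2$ absorbs $\tfrac{aw_i^2}{2}$ and $2a\fbt(1-w_i)^2$, leaving $2a^2w_i^2$). Your approach buys two things the paper's does not make visible: a structural reason why the $\MSPE$ separates into a $\xib^2$ coefficient plus an $\xib$-free remainder (odd-moment symmetry), and a much more compact equivalent form of the answer, $4\xib^2(\sigma_i^2+a-2w_ia)+2\sigma_i^4+2a^2-4a^2w_i^2$ with $\sigma_i^2=w_i^2(a+\quarter)+\fti(1-w_i)^2b_i$. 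The only thing to make explicit in a final write-up is the one-line verification that this compact form coincides with the stated \eqref{MSEwi}, which you have correctly outlined.
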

\begin{proof}
    See Appendix.
\end{proof}
\begin{rem}
Note that $\MSPE$ of $\hat\lambda_i$ depends on $X_i \beta$ but $X_i$ is unobserved. Therefore any estimators of the $\MSPE$ must incorporate an estimate for $X_i \beta$ as well.
\end{rem}
\subsection{Direct Predictor}\label{Direct}
Among all choices of $w_i$, one may choose to ignore any auxiliary information, which is equivalent to choosing $w_i = 1$, giving the direct predictor that is only based on responses within each area: \[\lamd = \hat\lambda_i(1) = Z_i^2 - 1/4.\]
While it is not the goal to advocate users to ignore strength that can be borrowed from other surveys, we note that the direct predictor is independent of $\xhatb$. Therefore, it does not suffer from any inaccuracies in estimating $\beta$ and $a$. The direct predictor also serves as a reference and provides a fall-back option if other issues persist, for instance, if parameters $a$ and $\beta$ cannot be reliably estimated (if $m$ is too small). One may use \Cref{MSEwi} to find that $\MSE[\lamd] = \xib^2 + a + \frac{1}{8}$.

\begin{rem}
    As we will show in \Cref{proopt}, the direct estimator can produce negative estimators. Despite not taking our specified form, we may use $\max (\lamd,0)$ as a straightforward improvement to $\lamd$.
\end{rem}

\subsection{Optimal Weight}\label{opt}
In the non-trivial case where $w_i \in (0,1)$, we shall search for candidates $w_i$ to achieve lower $\MSPE$. Ybarra and Lohr (2008) \cite{YL2008} showed that the optimal choice for the predictor $\theta_i$ is $w_i = 1 - \gamma_i$, in un-transformed case, where \[\gamma_i = \dfrac{1/4}{1/4+a+\fbt}.\]Yet the $\MSPE$ of their estimator is independent of $X_i\beta$, which led to the straightforward derivation. We showed that this is no longer the case, and in general, $w_i = 1 - \gamma_i$ does not minimize \eqref{MSEwi}. Treating \eqref{MSEwi} as a quartic polynomial in $w_i$, we can employ standard calculus methods to see that the minimizer of \eqref{MSEwi} must be a root of its derivative (up to a constant multiple) given by
\begin{equation}\label{dMSE}
   g\left(w_i;a,\fbt,\xib^2\right) = w_i^3-3\gamma_{bi}w_i^2+(2\gamma_{bi}^2-\gamma_{ai}^2+\gamma_{bi}+\gamma_{xi})w_i-(\gamma_{bi}^2+(\gamma_{ai}+\gamma_{bi})\gamma_{xi}),
\end{equation}
where \[\gamma_{ai} = \frac{a}{a+1/4+\fbt} \hspace{0.5cm} \gamma_{bi} = \frac{\fbt}{a+1/4+\fbt} \hspace{0.5cm}\text{and} \hspace{0.5cm} \gamma_{xi} = \frac{\xib^2}{a+1/4+\fbt}.\]
We may directly substitute $1 - \gam_i$ into $g$ to see that it is not equal to 0, hence it cannot be a minimizer of the $\MSPE$.

We can find explicit solutions for its roots through standard cubic solving techniques (Abramowitz and Stegun, page 17, \cite{cubic}) and denote the minimizer of \eqref{MSEwi} to be $w_{0i}$, but note that in practice cannot be explicitly found as $a$, $\beta$, and $X_i$'s are unobserved. We will denote $\lambda_{0i} = \hat\lambda_i(w_{0i})$ as the predictor with the optimal value of $w_i$.

\subsection{Proposed Weight}\label{pro}
While one might consider substituting $1 - B$ into \Cref{BP} in a similar model, we note that by \Cref{bias} the estimator is biased, nor does it have minimal $\MSPE$ by \Cref{thmMSE} even if it is bias-corrected. For comparison purposes later, we will denote $\lambda_i^{(B)} = \{(1- B) Z_i + B \xhatb\}^2 + \frac{1-B}{4}$ but note that this is not equal to $\hat\lambda_i (1 - B)$. 

On the other hand, despite not being the analytical minimum, we present other incentives that exist to choose $w_i = 1 - \gam_i$ on top of the predictor's simplicity and user-friendliness. As such we denote 
\begin{equation}\label{lamp}
    \lamp = \hat\lambda_i(1 - \gam_i) = \{(1-\gamma_i)Z_i+\gamma_i\hxib\}^2 - \gamma_i\left(\fbt-a\right),
\end{equation}
and we present a couple of intuitive justifications for this choice.

First of all, noting that $\theta_i = Z_i - e_i$, we may attempt to replace $e_i$ by its posterior expectation $\exp[e_i|Z_i-\xhatb]$ given the residual, which provides the predictor for $\theta_i$ among a linear class under functional measurement error under the normality assumption (Ybarra and Lohr (2008) \cite{YL2008}). Note that 
\[Z_i - \hat X_i \beta = r_i + e_i,\] and $(e_i,e_i + r_i)$ are jointly normal, which gives \[e_i|\zx \sim N\left(\gamma_i\zx, \frac{(1-\gamma_i)}{4}\right).\]
Furthermore, replacing $e_i$ by $\exp[e_i|\zx] = \gam_i\zx$ is equivalent to choosing $w_i = 1-\gamma_i$, and by \Cref{bias} we obtain
\begin{align*}
    \exp[\{(1-\gamma_i)Z_i+\gamma_i\hxib\}^2 - \lambda_i] &= \left(1-\gamma_i\right)^2\left(a+\quarter\right)+\gamma_i^2\fbt-a\\
    &= \gamma_i^2\left(a+\quarter+\fbt\right)-2\gamma_i\left(a+\quarter\right)+\left(a+\quarter\right)-a\\
    &=\quarter \gamma_i - 2\gamma_i\left(a+\quarter\right)+\gamma_i\left(a+\quarter+\fbt\right)\\
    &=\gamma_i\left(\quarter-2a-\frac{2}{4}+a+\quarter+\fbt\right)\\
    &=\gamma_i\left(\fbt-a\right),
\end{align*}
noting that $\quarter = \gamma_i\left(a+\quarter+\fbt\right)$. Combining gives
    \[\lamp = \{(1-\gamma_i)Z_i+\gamma_i\hxib\}^2 - \gamma_i\left(\fbt-a\right),\]
as suggested in \Cref{lamp}.

Another way to derive the same predictor is by considering the Maximum Likelihood Estimator (MLE) or the Uniformly Minimum-Variance Unbiased Estimator (UMVUE) for the unknown parameter $X_i\beta$ using both observations $\xhatb$ and $Z_i$. 
\begin{prop}
    The MLE and UMVUE of $X_i\beta$ are given by
    \begin{equation}\label{MLE}
        T_i = \gam_b Z_i + (1-\gam_b)\xhatb,
    \end{equation}
    where $\gam_b = \dfrac{\fbt}{a+1/4+\fbt}$.
\end{prop}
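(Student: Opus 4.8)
The plan is to regard $\mu := X_i\beta$ as a single unknown scalar and to treat $Z_i$ and $\hat X_i\beta$ as two independent observations of $\mu$ whose variances are known. From the model setup, $Z_i \sim N(\mu, a + 1/4)$ and $\hat X_i\beta \sim N(\mu, b_i/t_i)$ independently; write $\sigma_1^2 = a + \quarter$ and $\sigma_2^2 = \fbt$ for brevity.

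For the MLE I would form the joint log-likelihood
\[
\ell(\mu) = -\frac{(Z_i - \mu)^2}{2\sigma_1^2} - \frac{(\hat X_i\beta - \mu)^2}{2\sigma_2^2} + \mathrm{const},
\]
differentiate in $\mu$, and solve $\ell'(\mu) = 0$. This gives the precision-weighted average
\[
\hat\mu = \frac{\sigma_2^2\, Z_i + \sigma_1^2\, \hat X_i\beta}{\sigma_1^2 + \sigma_2^2},
\]
and substituting $\sigma_1^2 = a + \quarter$, $\sigma_2^2 = \fbt$ collapses the coefficient of $Z_i$ to $\gam_b = \frac{\fbt}{a + 1/4 + \fbt}$ and that of $\hat X_i\beta$ to $1 - \gam_b$, which is exactly $T_i$ in \eqref{MLE}.

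For the UMVUE I would use that the joint density of $(Z_i, \hat X_i\beta)$ is a one-parameter exponential family in $\mu$. Expanding the squares in $\ell(\mu)$, the only term coupling the data with $\mu$ is $\mu\,S$ with $S := Z_i/\sigma_1^2 + \hat X_i\beta/\sigma_2^2$; hence $S$ is a complete sufficient statistic for $\mu$ (completeness being the standard consequence of $\mu$ ranging over an open interval). Since $S$ is an affine combination of independent normals, $\mathbb{E}[S] = \mu(1/\sigma_1^2 + 1/\sigma_2^2)$, so $S/(1/\sigma_1^2 + 1/\sigma_2^2)$ is unbiased for $\mu$; by Lehmann--Scheff\'e this unbiased function of the complete sufficient statistic is the unique UMVUE, and a direct check shows it equals the $\hat\mu$ found above.

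I do not anticipate a real obstacle: both derivations are textbook, and the only points requiring care are the algebraic simplification showing the inverse-variance weights equal $\gam_b$ and $1 - \gam_b$, and the remark that the MLE and UMVUE coincide here precisely because the MLE is already an unbiased affine function of the complete sufficient statistic $S$.
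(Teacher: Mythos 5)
Your derivation follows the same route the paper sketches: likelihood equations for the MLE and an exponential-family/Lehmann--Scheff\'e argument for the UMVUE. The MLE half is complete and correct: with $Z_i\sim N(\mu,\,a+\quarter)$ and $\hat X_i\beta\sim N(\mu,\,\fbt)$ independent, the precision-weighted average has $Z_i$-coefficient $\fbt\big/(a+\quarter+\fbt)=\gam_b$, which is exactly $T_i$; one can also check that maximizing the full likelihood over the vector $X_i$ and invoking invariance of the MLE under the map $X_i\mapsto X_i\beta$ gives the same answer.

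The UMVUE half has a gap, one you share with the paper's own two-sentence sketch. You apply completeness to the reduced data $(Z_i,\hat X_i\beta)$ with scalar parameter $\mu=X_i\beta$; that establishes minimum variance only among unbiased estimators that are functions of $(Z_i,\hat X_i\beta)$. The observed data are $(Z_i,\hat X_i)$ and the unknown parameter is the vector $X_i\in\mbbr^p$, and in that experiment the minimal sufficient statistic $(Z_i,\hat X_i)$ is \emph{not} complete: the family is a curved exponential family (natural statistic of dimension $p+1$, parameter of dimension $p$), and for instance $Z_i-\hat X_i\beta$ has mean zero for every $X_i$ without being almost surely zero. (The paper's appeal to ``minimal sufficiency'' is likewise insufficient --- Lehmann--Scheff\'e requires completeness.) To close the argument you must either justify the data reduction, e.g.\ by noting that the components of $\hat X_i$ orthogonal to $\Sigma_i\beta$ are independent of $(Z_i,\hat X_i\beta)$ and their law involves only nuisance coordinates of $X_i$, or verify directly that $T_i$ is uncorrelated with every unbiased estimator of zero; the latter follows by differentiating $\exp_{X_i}[U]=0$ with respect to $X_i$ and taking the linear combination of the resulting score identities that reproduces $T_i-X_i\beta$. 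With either repair the stated result is correct, and your algebra identifying the weights as $\gam_b$ and $1-\gam_b$ is fine.
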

The proof for the MLE statement can be obtained directly by considering the likelihood equations. The proof for the UMVUE statement can be obtained using the Lehmann Scheffe Theorem, using the minimal sufficiency property of the exponential family, to which the multivariate normal belongs.
We return to \Cref{BP} and plug in $T_i$ defined in \Cref{MLE} for $\xib$. Since the predictor is determined solely by $w_i$, and
\begin{equation}\label{MLEplug}
    \begin{split}
    (1-B)Z_i+BT_i &= (1-B)Z_i+B(\gam_b Z_i + (1-\gam_b)\xhatb)\\
    &= (1-B + B\gam_b)Z_i + B(1-\gam_b)\xhatb\\
    &=\left(1 - \dfrac{1/4}{a+1/4}+\dfrac{1/4}{a+1/4}\dfrac{\fbt}{a+1/4+\fbt}\right)Z_i + \dfrac{1/4}{a+1/4}\dfrac{a+1/4}{a+1/4+\fbt}\xhatb\\
    &= (1-\gamma_i)Z_i + \gamma_i \hxib,
    \end{split}
\end{equation}
we obtain $\lamp$ once again, as in \Cref{lamp}. 

\subsection{The Proposed weight as an alternative of the Optimal Weight}\label{proopt}
We observe that $\gam_i - B \approx 0$ when $t_i$ is large for all areas $i = 1,...,m$, which allows us to recover the Bayes estimator when the effect of measurement error diminishes. We can consider the case of large $\inf_i t_i$ when observations from a secondary survey are abundant, giving the same result across all areas. As $w_i = 1 - B$ is also the minimizer of the $\MSPE$ without measurement error, we also observe that $w_{0i} - (1 - B) \approx 0$ for large $t_i$. In practice, $w_{0i}$ and $1 - \gam_i$ are often close to each other for large $t_i$, and due to the simple form of $1- \gamma_i$, we suggest it as a desirable choice. Write $g_{1i}(w_i)$ and $g_{2i}(w_i)$, so that in \Cref{MSEwi} we have
\begin{align*}
    \MSPE[\hat\lambda_i(w_i)] 
     = \xib^2 g_{1i}(w_i) + g_{2i}(w_i).
\end{align*}
Note that $1 - \gam_i$ minimizes $g_{1i}$, and let $w_{2i}$ denote the minimizer or $g_{2i}$ restricted to the closed interval $[0,1]$. Then 
\begin{align*}
\xib^2\{g_{1i}(1 - \gam_i) - g_{1i}(w_{0i})\} + \{g_{2i}(1 - \gam_i) - g_{2i}(w_{2i})\} > 0,
\end{align*}
which rearranges to 
\[g_{1i}(w_{0i}) - g_{1i}(1 - \gam_i)< \frac{g_{2i}(1 - \gam_i) - g_{2i}(w_{2i})}{\xib^2}.\]
Since $1 - \gam_i$ minimizes $f_{1i}$, we rewrite
\begin{align*}
    g_{1i}(w_{i}) = \left(4a +4\fbt+1 \right)\left\{w_i - (1 - \gam_i)\right\}^2 + \left(4a +4\fbt \right)\gam_i,
\end{align*}
and obtain
\begin{align*}
    g_{1i}(w_{0i}) - g_{1i}(1-\gam_i) = \left(4a +4\fbt+1 \right)\left\{ w_{0i} - (1-\gam_i)\right\}^2.
\end{align*}
Noting that $g_{2i}(1 - \gam_i) - g_{2i}(w_{2i}) > 0$, we obtain the bound
\begin{equation}\label{rootbound}
    |w_{0i} - (1 - \gam_i)| < \sqrt{\frac{g_{2i}(1 - \gam_i) - g_{2i}(w_{2i})}{\xib^2(4a+4\fbt+1)}}
\end{equation}
whose proficiency scales with the size of $\xib^2$.

\subsection{Range Preservation}\label{RangePres}
Note that the predictor $\hat\lambda_i(w_i)$ admits negative values with positive probability if the bias given in \Cref{bias} is positive. For instance, in the case of our predictor $\lamp$, if $\fbt > a$ then this probability is positive, which propagates us to define $\hat\lambda_{i+}(w_i) = max(0,\hat\lambda_i(w_i))$ as a straightforward improvement to $\hat\lambda_i$, even though it is not in the aforementioned form. 
By the independence of $Z_i$ and $\hat X_i\beta$, \[w_i Z_i + (1 - w_i) \hat X_i\beta \sim N\left(X_i\beta, w_i^2(a+1/4)+(1-w_i)^2 \fbt \right).\]
When the bias given by \Cref{bias} $C_i(w_i) > 0$, the probability of obtaining a negative predictor is
\begin{equation}\label{probeq}
    \begin{split}
    &\prob\left[-\sqrt{C_i} < w_i Z_i + (1 - w_i) \hat X_i\beta < \sqrt{C_i}\right]\\
    =&  \prob\left[Z \in \left(\frac{- X_i\beta -\sqrt{C_i} }{\sqrt{w_i^2(a+1/4)+(1-w_i)^2 \fbt}}, \frac{- X_i\beta + \sqrt{C_i} }{\sqrt{w_i^2(a+1/4)+(1-w_i)^2 \fbt}}\right)\right],
    \end{split}
\end{equation}
where $Z$ denotes a standard normal random variable independent of $Z_i$ and $X_i \beta$, and is not to be confused with $Z_i$. Note that this probability is small provided that $(X_i\beta)^2$ is moderately large and that the probability also vanishes as $C_i \to 0$. for the proposed estimator $\lamp$, we see that for sufficiently large $t_i$, $\fbt - a < 0$ so that there is no need for an adjusted predictor. 

\begin{rem}
    When $\fbt - a > 0 $, the improved predictor $\lambda_{i+}^{(P)} := \max (0,\lamp)$ is biased, therefore it is possible to further reduce the $\MSPE$ by bias correction. Yet, the unknown probability depends on unknown parameters $\beta$ and $a$ and also on unobserved covariates $X_i$ so may cause difficulties. However, this probability is often negligible, and in \Cref{secSIM} we show that under reasonable settings, there is often no need to be concerned about producing negative values.
\end{rem}

\section{Empirical predictors}\label{secEST}
Since $a$ and $\beta$ are generally unknown, estimators proposed in the previous section often need to be replaced by their empirical counterparts in practice. We will mention two sets of existing estimation schemes in this section, and readers can refer to the existing literature for further description. Both methods will be applied in \Cref{secSIM} to compare numerical results. 

We first estimate for $\beta$ with ordinary least squares using the observed covariates $\hat X_i$ instead of $X_i$, and denote the estimator as \[\hat\beta = \hat\beta^{(PR)} = (\hat X^T \hat X)^{-1}\hat X^T Z.\] Using $\hat\beta$ we then follow Equation (3.11) of Prasad and Rao (1990) \cite{PR1990} for estimation $a$, again replacing any instances of $X_i$ by $\hat X_i$, denoting the estimator as $\hat a^{(PR)}$. By doing so, we obtain
\begin{equation}\label{aPR}
   \hat a^{(PR)} = \frac{1}{m - p}\left[\sum_{i=1}^m (Z_i - \hat X_i\hat\beta)^2  - \frac{1}{4}\sum_{i=1}^m\{1 - \hat X_i (\hat X^T \hat X)^{-1}\hat X_i\}.\right]
\end{equation}
Next, we refer to Ybarra and Lohr (2008) \cite{YL2008} for another estimation scheme. To begin with, an estimator of $\beta$, which we denote $\hat\beta^{(YL)}$ can be obtained by solving the modified least squares equation:
\begin{equation}\label{ModLS}
    \left\{\sum_{i=1}^{m}w_i(\hat X_i^T \hat X_i - \frac{1}{t_i}\Sigma_i)\right\}\tilde\beta = \sum_{i=1}^{m}w_i \hat X_i Z_i
\end{equation}
for weights $w_1,...,w_m > 0$. When an inverse exists, we obtain 
\[\hat\beta^{(YL)} = \left(\sum_{i=1}^{m}w_i(\hat X_i^T \hat X_i - \frac{1}{t_i}\Sigma_i)\right)^{-1}\sum_{i=1}^{m}w_i \hat X_i Z_i.\]
Otherwise, we may utilize the Moore-Penrose inverse to obtain an estimator of $\beta$. For more details, we refer to Ybarra and Lohr (2008) \cite{YL2008}) for details. In our setting, we choose $w_1 = w_2 = ... = w_m$ for $i = 1,...,m$ so that \Cref{ModLS} reduces to 
\begin{equation}\label{ModLS2}
    \left\{\sum_{i=1}^{m}\hat X_i^T \hat X_i - \frac{1}{t_i}\Sigma_i \right\}\tilde\beta = \sum_{i=1}^{m} \hat X_i Z_i.
\end{equation}
This is because the sampling error variance is assumed to be constant $( = 1/4)$ across all areas $i = 1,...,m$, so it is much more straightforward to use equal weights across all areas. Moreover, doing so also avoids the need for multiple iterations, in which convergence is not guaranteed. We note that this is different from Ybarra and Lohr (2008) \cite{YL2008}, where the method requires iterating and updating weights.
We then follow their paper and estimate $a$ by
\begin{equation}\label{aYL}
\hat a^{(YL)} = \frac{1}{m - p}\sum_{i=1}^{m}\left\{(Z_i- \hat X_i \hat\beta^{(YL)})^2 - \hat\beta^{(YL)T} \Sigma_i \hat\beta^{(YL)} - \frac{1}{4}\right\}
\end{equation}
plugging in the estimator $\hat\beta^{(YL)}$ for $\beta$. 
\begin{rem}
In both estimation methods, estimates of $a$ can be negative, and we estimate $a$ as $0$ in such cases. 
\end{rem}

Upon obtaining estimates of $\beta$ and $a$, we may return to estimators in \Cref{secmain} and replace $\beta$ and $a$ for their estimated counterparts when appropriate. For instance, the weight $1 - \gam_i$ will be replaced by $1 - \dfrac{1/4}{1/4 +\hat a^{(PR)} + \frac{1}{t_i}\hat\beta^T\Sigma_i\hat\beta}$ when we use the first estimation scheme. In the case of $w_{0i}$, for the estimator to be applied without knowledge of $X_i$, $a$, and $\beta$, we may replace $X_i$ by $\hat X_i$, $a$ and $\beta$ by their estimates in \Cref{dMSE} and numerically obtain a solution. We attach a superscript of $ ^{(PR)}$ and $ ^{(YL)}$ to the empirical versions of the estimators when the estimates of $\beta$ and $a$ are derived using the methods from Prasad and Rao (1990) \cite{PR1990} and Ybarra and Lohr (2008) \cite{YL2008}, respectively.

\section{Simulation Study}\label{secSIM}
\subsection{Model Setup}
We simulate with true parameters $a = 0.2$ and $\beta = (1,0.5,1.5,1,0.3,2)$ and generate $\Sigma = \Sigma_i$ for $i = 1,...,m$ as Poisson random variables of mean 10 divided by 10. We compare results from $m = 20$ and $m = 100$, and in each case split the areas in half so that $t_i = 10$ for $i = 1,...,m/2$ and $t_i = 100$ for $i = m/2+1 ,...,m$. Entries of the design matrix $X$ are generated as independent $N(4,1)$ with a column of 1's as intercept. Apart from the intercept, the rows $X_i$ of $X$ are augmented with independent measurement errors $\Delta_i \sim N(0,\frac{1}{t_i}\Sigma)$, which are denoted as $\hat X_i$ like the rest of the paper. We also generate $v_i \sim N(0,a)$ and $e_i \sim N(0,\quarter)$ as independent Normal random variables respectively. We perform 10000 independent trials and compare performances using empirical $\MSPE$ (eMSPE).

We use the best predictor $\lambp$ (known $X$) and the direct predictor $\lamd$ as benchmarks, and compare $\lamp$, $\lamb$ (obtained by directly substituting $\hat X_i\beta$ into the best predictor of \Cref{BP}) and $\lamroot$ (with the optimal weight derived from the polynomial in \Cref{opt}). Again we note that knowledge of $X_i\beta$ is required to find $\lamroot$, despite it taking the considered form. Furthermore, we compare the empirical predictors $\lamppr$, $\lampyl$, $\lambpr$, $\lambyl$, $\lamrootpr$, and $\lamrootyl$ to inspect the effect of estimating the parameters and evaluate the performances of the two estimation schemes. The function \textit{polyroot} in the programming language R was used to obtain numerical roots for $\lamroot$, $\lamrootpr$, and $\lamrootyl$.

\begin{rem}
No negative values were produced in the numerical experiment so there was no need for adjusting any predictors to 0. We also numerically calculated the probabilities based on \Cref{probeq} and obtained numerical 0's in R.
\end{rem}

\subsection{Predictors with known model parameters}\label{NumRes}
We begin by comparing the numerical results of the predictors discussed in \Cref{secmain}. \Cref{tab1} compares the Theoretical and Numerical values of the Bias and $\MSPE$ obtained in two areas ($1^{st}$ and $20^{th}$) with different values of $t_i$. Note that theoretical values and empirical values are extremely close, showing the reliability of this numerical experiment. 

From \Cref{tab1}, most of the predictors are unbiased, and in the case of $\lamb$, the bias becomes negligible for large values of $t_i$. Referring to \Cref{thmMSE}, the $\MSPE$s of $\lamp$, $\lamb$, $\lamroot$ also halved when $t_i$ increases from 10 to 100 while $X_1\beta = 19.749$ and $X_{20}\beta = 19.514$ stayed roughly the same. We also note that when $t_i = 10$, the performance of $\lamb$ is worse than $\lamd$, meaning a naive substitution for $X_i \beta$ can backfire and produce worse estimates than the simplest ones.

The theoretical MSPEs of $\lamroot$ are marginally lower than $\lamp$, as justified in \Cref{secmain}, and on two occasions empirically $\lamp$ has even outperformed $\lamroot$ ($m = 20$, $t_i =100$ and $m=100$, $t_i = 100$) slightly. This justifies that $\lamp$ is an exceptionally good substitute for the optimal $\lamroot$, and this is shown further when inspecting the values of $w_i$, showing that $1 - \gam_i$ and $w_{0i}$ are extremely close, as shown in \Cref{rootbound}.

\subsection{Empirical Estimators}
\Cref{tab2} compares the performances of the empirical predictors. It is unsurprising that empirical predictors are inferior compared to those with known model parameters, regardless of estimation method, predictor, and the size of $t_i$. 

By inspecting the columns in \Cref{tab2} with the empirical predictors, we are able to see that the scheme of Prasad and Rao (1990) \cite{PR1990} outperforms that of Ybarra and Lohr (2008) \cite{YL2008} in all scenarios except the case of $\lamp$ for $t_i = 100$. Yet, it is still encouraging to see that regardless of the estimation scheme, both $\lamppr$ and $\lampyl$ still provide a superior alternative to the direct estimator $\lamd$. However, in the cases of $\lamb$ and $\lamroot$, there were cases when the empirical MSPE of the predictors with estimated coefficients perform significantly worse, especially for $\lamrootyl$ when $t_i = 100$. 

The method of Prasad and Rao (1990) along with the least squares estimator seems to be extremely stable overall, and in general, the empirical predictors did not perform much worse than when the model parameters are known. Comparing all predictors, it is apparent that empirical versions of our proposed estimator, $\lamppr$ and $\lampyl$ are stable and outperform the other predictors discussed in this paper. 

\begin{landscape}
    \begin{table}
    \centering
    \begin{tabular}{c|c|c|c|c|c|c}
         &&$\lambp$&$\lamd$ & $\lamp$ & $\lamb$&$\lamroot$\\
         \hline
         \multirow{5}{*}{$i = 1$, $t_{20} = 10$}&Theoretical Bias&0&0 & 0 & 0.2036& 0\\
         &Empirical Bias & 0.1047& 0.0955 &0.0872& 0.2762& 0.0872\\
         &Theoretical MSPE & 173.4032& 391.5527& 302.7520&491.2081 &302.7518\\
         &Empirical MSPE&174.4916 &388.7930 &302.8152& 490.6658& 302.8155\\
         & $w_i$ & 0.4444444& 1& 0.7746453& 0.4444444& 0.7747139\\ 
         \hline
        \multirow{5}{*}{$i = 20$, $t_1 = 100$}&Theoretical Bias&0&0 & 0 & 0.0203& 0 \\
         &Empirical Bias &-0.1989& -0.1907 &-0.1834& -0.1619& -0.1834\\
         &Theoretical MSPE & 169.3127 &382.3491 &196.4571& 200.4236 &196.4571\\
         &Empirical MSPE&169.5219 &385.2741 &196.1435 &199.5052& 196.1432 \\
         & $w_i$ & 0.4444444& 1 &0.5154432& 0.4444444 & 0.5154757\\
         \hline
    \end{tabular}
    \caption{Comparison of Theoretical and Empirical values of the Bias and $\MSPE$ for the predictors discussed in \Cref{secmain}. Specifically, the areas $i = 1$ and $i = m$ were inspected in the case $m = 20$. Note that $w_i$'s of $\lambp$ and $\lamb$ do not imply that $\lambp$ nor $\lamb$ are in our suggested form, it merely means $B = 0.4444444$.}
    \label{tab1}
\end{table}

    \begin{table}
    \centering
    \begin{tabular}{c|c|c|c|c|c|c|c|c|c|c|c|c|c}
         &&&$\lambp$&$\lamd$&$\lamp$&$\lamppr$&$\lampyl$&$\lamb$&$\lambpr$&$\lambyl$&$\lamroot$&$\lamrootpr$&$\lamrootyl$\\
         \hline
         \multirow{4}{*}{$m = 20$}&\multirow{2}{*}{$t_i = 10$}&Theoretical&222.7403 & 502.5611 & 388.7517 &&&  630.9554 &&&388.7516&&\\ 
         
         &&Empirical&223.4538 & 503.1421 & 389.3859  & 425.8178  & 430.4678 & 628.5475  & 443.4179  & 987.9840 &  389.3855 &  426.5739 &  592.9517\\

         \cline{2-14}
         
         &\multirow{2}{*}{$t_i=100$}&Theoretical&         252.9221 &  570.4703 & 293.4290 &&& 299.3134 &&& 293.4290&&\\
         
         &&Empirical&253.1840   &  567.6188   &  294.2121   &  419.9035   &  514.5984    & 300.4101   &  420.1814  &   680.1155    & 294.2122    & 421.3665   & 1303.7896\\
 
         \hline
        \multirow{4}{*}{$m = 100$}&\multirow{2}{*}{$t_i = 10$}&Theoretical&211.2762 &476.7671 &363.9533 &&&570.6043 &&&363.9532&&\\ 
         
         &&Empirical&211.7579   & 476.4815  &  364.3382  & 371.9718  & 371.8920 &  570.7392 &  379.6682  & 712.3375 &  364.3381 &  372.3204  & 504.7771\\

          \cline{2-14}

        &\multirow{2}{*}{$t_i=100$}&Theoretical& 240.2762 & 542.0170  & 276.3251 &&& 281.1800 &&& 276.3251&&\\
        
        &&Empirical&240.0288  & 539.7177 &  276.1711  & 332.6079 &  326.0317  & 281.1716  & 326.0345  & 367.7846  & 276.1712  & 334.7721  & 804.2483\\
        \hline
    \end{tabular}
    \caption{Empirical MSPEs of the eleven predictors discussed in either \Cref{secmain} or \Cref{secEST} of this paper. The values are averaged over all $m/2$ areas for each value of $t_i$. Predictors with estimated model parameters are shown to perform worse than their counterparts when model parameters are known, and the performance of the two estimation schemes discussed in \Cref{secEST} are compared. Note that since the $\MSPE$s depend on $X_i \beta$, one should the predictors within each row instead of comparing the same predictor across rows.}
    \label{tab2}
\end{table}
\end{landscape}

\section{Discussion}\label{secDIS}
We considered a simple and user-friendly class of predictors under the small area estimation model with measurement error and the square-root transform. We showed analytically what the best predictor among the class is and proposed an intuitive substitute, which was backed by theoretical and numerical evidence. We compared estimation schemes for empirical predictors and showed that in general naive substitutions can backfire if misused. 

Future avenues of research include $\MSPE$ estimation for the empirical predictors, where one can appeal to Bootstrap or Jackknife schemes. As discussed in \Cref{RangePres}, one may also consider bias adjustment for $\hat\lambda_{i+}$ in the case where the bias is positive.

\section{Acknowledgement}
The first author would like to thank the Joint Graduate School of Mathematics for Innovation (JGMI) of Kyushu University for their support. The second author’s research was partially supported by Kyushu University Diversity and Super Global Training Program for Female and Young Faculty (SENTAN Q) and JSPS KAKENHI grant number 22K01426.

\bibliography{KHG.manuscript}

\Addresses

\newpage
\section*{Appendix}\label{SecApp}
\subsection*{Proof of \Cref{thmMSE}}
Writing $C_i=-\{w_i^2(a+1/4) + (1-w_i)^2\fbt-a\}$ we have
\begin{align*}
    &\exp\left[\left\{\left(w_iZ_i + (1-w_i)\xhatb\right)^2+C_i-\lambda_i\right\}^2\right]\\
    =&\exp\left[\left\{w_iZ_i + (1-w_i)\xhatb\right\}^4+C_i^2+\lambda_i^2+2C_i\left\{w_iZ_i + (1-w_i)\xhatb\right\}^2-2C_i\lambda_i-2\lambda_i\left\{w_iZ_i + (1-w_i)\xhatb\right\}^2\right].
\end{align*}
Calculating each of the six terms separately, we have
\begin{align*}
    \exp[\left\{w_iZ_i + (1-w_i)\xhatb\right\}^4] &= \xib^4+6\xib^2\left\{w_i^2(a+\quarter)+(1-w_i)^2\fbt\right\}+3\left\{w_i^2(a+\quarter)+(1-w_i)^2\fbt\right\}^2\\
    \exp\left[C_i^2\right]&=C_i^2,\\
    \exp\left[\lambda_i^2\right] &= \xib^4 + 6\xib^2a+3a^2,\\
    \exp\left[2C_i\left\{w_iZ_i + (1-w_i)\xhatb\right\}^2\right] &= 2C_i\left[\xib^2 + \left\{w_i^2(a+\quarter)+(1-w_i)^2\fbt\right\}\right],\\
    \exp[-2C_i\lambda_i] &= -2C_i\left\{\xib^2+a\right\},
\end{align*}
and
\begin{align*}
    &\exp\left[-2\lambda_i\left\{w_iZ_i + (1-w_i)\xhatb \right\}^2\right]   \\ =&-2\exp\left[\theta_i^2\left\{w_i^2Z_i^2+2w_i(1-w_i)Z_i\xhatb+(1-w_i)^2\hxib^2\right\}\right]\\        
    =&-2\exp\left[\theta_i^2\left\{w_i^2(\theta_i+e_i)^2+2w_i(1-w_i)(\theta_i+e_i)\xhatb+(1-w_i)^2\hxib^2\right\}\right]\\    =&-2w_i^2\exp\left[\theta_i^4\right]-4w_i^2\exp\left[\theta_i^3\right]\exp\left[e_i\right]-2w_i^2\exp\left[\theta_i^2\right]\exp\left[e_i^2\right]-4w_i(1-w_i)\exp\left[\theta_i^3+\theta_i^3e_i\right]\xib
    -2(1-w_i)^2\exp\left[\theta_i^2\right]\exp\left[\hxib^2\right]\\
    =&-2\bigg[w_i^2\left\{\xib^4+6\xib^2+3a^2\right\}+w_i^2\quarter\left\{\xib^2+a\right\}\\
    &+2w_i(1-w_i)\left\{\xib^4+3\xib^2a\right\}+(1-w_i)^2\left\{\xib^2+a\right\}\left\{\xib^2+\fbt\right\}\bigg]\\
    =&-2\bigg[\xib^4+\xib^2\left\{6aw_i^2+\quarter w_i^2+6w_i(1-w_i)a+(1-w_i)^2(a+\fbt)\right\}
    +\left\{3a^2w_i^2+\quarter aw_i^2 +(1-w_i)^2a\fbt\right\}\bigg].
\end{align*}
Summing up and rearranging gives
\begin{align*}
     \exp\left[\left\{\hat\lambda_i(w_i) - \lambda_i \right\}^2\right] &= \xib^2\left[(4a+4\fbt+1)w_i^2-(8a+8\fbt)w_i+4a+4\fbt\right]\\
     &+ \bigg[3\left(w_i^2(a+\quarter)+(1-w_i)^2\fbt\right)^2 - \left(w_i^2(a+\quarter)+(1-w_i)^2\fbt-a\right)^2\\
     &+ 3a^2 -6a^2w_i^2-\frac{aw_i^2}{2}-2a\fbt(1-w_i)^2\bigg]
\end{align*}
as in \Cref{MSEwi}.

\end{document}